\documentclass[conference]{IEEEtran}
\IEEEoverridecommandlockouts

\usepackage{cite}
\usepackage{amsmath,amssymb,amsfonts}
\usepackage{algorithmic}
\usepackage{graphicx,enumitem,kantlipsum}
\usepackage{textcomp}
\usepackage{xcolor}
\usepackage{mathtools}
\mathtoolsset{showonlyrefs}
\def\BibTeX{{\rm B\kern-.05em{\sc i\kern-.025em b}\kern-.08em
    T\kern-.1667em\lower.7ex\hbox{E}\kern-.125emX}}

\usepackage{bbm}

\usepackage{tikz}
\usetikzlibrary{arrows.meta}

\newtheorem{theorem}{Theorem}
\newtheorem{proposition}{Proposition}
\newtheorem{lemma}{Lemma}

\newtheorem{remark}{Remark}
\newtheorem{corollary}{Corollary}

\def\fndot{\, \cdot \,}
\newenvironment{proof}[1][Proof]
  {\begin{IEEEproof}[#1]}
  {\end{IEEEproof}}

\begin{document}

\title{On the Tightness of Omura's Strong Converse Exponent and Critical Rates Above Capacity
\thanks{The work of H.~Joudeh was supported by the European Research Council (Grant No.~101116550). The work of V.~Y.~F.~Tan was supported by a Singapore Ministry of Education AcRF Tier~2 grant (A-8004062-00-00) and a Singapore National Research Foundation Investigatorship (A-8004657-00-00).}
}

\author{
\IEEEauthorblockN{Hamdi Joudeh}
\IEEEauthorblockA{Department of Electrical Engineering \\
Eindhoven University of Technology \\
\texttt{h.joudeh@tue.nl}} \and  \IEEEauthorblockN{Vincent Y. F. Tan}
\IEEEauthorblockA{Department of Mathematics and Department of ECE \\
National University of Singapore\\
\texttt{vtan@nus.edu.sg}}
}

\maketitle

\begin{abstract}
Omura's lower bound on the probability of correct decoding at rates above capacity has an exponent of sphere packing form. Dueck and K\"orner refined Omura's argument through a codebook extension step and obtained the exact strong converse exponent.
It has been claimed in the literature that the two exponents coincide at all rates, but the argument contains an error. 
To settle this question, we first show that above a threshold rate, the strong converse exponent follows a straight line of unit slope determined by the R\'enyi capacity of order infinity. 
This threshold rate plays the role of a critical rate above capacity, while the order-infinity capacity plays that of a cutoff rate. 
We then show that the two exponents coincide up to a second threshold rate, which lies above the first, and differ strictly beyond it.  Examples include modulo-additive channels, where there is no gap; and the binary erasure channel, where the thresholds and the gap are computed in closed form.
\end{abstract}
\begin{IEEEkeywords}
Strong converse exponent, discrete memoryless channels, R\'enyi capacity, critical rate, cutoff rate
\end{IEEEkeywords}
\section{Introduction}
\label{sec:introduction}
Consider a discrete memoryless channel (DMC) with input alphabet $\mathcal{X}$ and output alphabet $\mathcal{Y}$, both finite, and transition probability matrix $W$. Let $I(P,W)$ denote the mutual information between an input $X$ and the induced output $Y$, where $(X,Y) \sim P \times W$ for some input distribution $P$. The capacity of this channel is given by $C(W) := \max_{P} I(P,W)$.

Let $p_{\mathrm{e}}^{\star}(n,R)$ be the average probability of error of an optimal code of blocklength $n$ and rate at least $R$.
The strong converse theorem for the DMC, first proved by Wolfowitz \cite{Wolfowitz1957}, asserts that $p_{\mathrm{e}}^{\star}(n,R) \to 1$ as $n \to \infty$ for all rates above capacity.
Later, Arimoto \cite{Arimoto1973} showed that this convergence is at least exponentially fast by establishing the lower bound
\begin{equation}
    p_{\mathrm{e}}^{\star}(n,R) \geq 1 - \exp \left( - n K_{\mathrm{A}}(R) \right),
\end{equation}
where  Arimoto's strong converse exponent $K_{\mathrm{A}}(R)$ is given by
\begin{equation*}
    K_{\mathrm{A}}(R) :=  \max_{\rho \in [-1,0]} \left\{  - \rho R + \min_{P} E_0(\rho,P) \right\},
\end{equation*}
and $E_0(\rho,P)$ is Gallager's function
\begin{equation}
   E_0(\rho,P)  := - \log \sum_{y \in \mathcal{Y}} \bigg( \sum_{x \in \mathcal{X}} P(x) W(y|x)^{\frac{1}{1+\rho}} \bigg)^{1+\rho},
\end{equation}
which was first introduced in the characterization of error exponents at rates below capacity \cite{Gallager1968}. Arimoto further noted that $K_{\mathrm{A}}(R) > 0$ if and only if $R > C(W)$, which recovers Wolfowitz's strong converse. His proof exploits the invariance of the error probability under permutations of the codebook, together with the analytical properties (mainly convexity) of the $E_0$ function established by Gallager \cite{Gallager1968}.
An alternative proof of Arimoto's strong converse theorem appears in \cite{Polyanskiy2010}.

Arimoto's bound naturally raises the question of the exact exponential rate at which $p_{\mathrm{e}}^{\star}(n,R)$ approaches one. This rate is the strong converse exponent for channel coding, also known as the reliability function above capacity, and is defined as
\begin{equation}
    K_{\mathrm{sc}}(R) := \limsup_{n \to \infty} - \frac{1}{n} \log \big( 1 - p_{\mathrm{e}}^{\star}(n,R) \big). \label{eqn:Ksc}
\end{equation}
In this terminology, Arimoto's result gives a lower bound $K_{\mathrm{sc}}(R) \geq K_{\mathrm{A}}(R)$. Omura \cite{Omura1975} derived an upper bound (i.e., achievable exponent) given by $K_{\mathrm{sc}}(R) \leq K_{\mathrm{O}}(R)$, where\footnote{Omura's argument does not fix the composition of the code. The form with a minimum over $P$ is obtained by fixing the composition, as in \cite[Lemma~2]{Dueck1979}.}
\begin{equation}
    K_{\mathrm{O}}(R) := \min_{P} \min_{V: I(P,V) \geq R} D(V \| W | P).
    \label{eqn:K_O}
\end{equation}
Omura obtained this bound using a change-of-measure argument applied to DMC $W$, introduced by Haroutunian~\cite{Haroutunian1968}; see also~\cite{Csiszar2011}.
It is readily seen that $K_{\mathrm{O}}(R)>0$ if and only if $R>C(W)$. The variational formulas for $K_{\mathrm{A}}(R)$ and $K_{\mathrm{O}}(R)$ do not immediately reveal when equality holds. Omura asserted in~\cite{Omura1975} that the two exponents coincide for rates sufficiently close to capacity from above.

The strong converse exponent was determined by Dueck and K\"orner \cite{Dueck1979}, who showed that the $\limsup$ in \eqref{eqn:Ksc} is a limit and that $K_{\mathrm{sc}}(R) = K_{\mathrm{DK}}(R)$, where
\begin{equation}
    K_{\mathrm{DK}}(R) := \min_{P,V} \big\{ D(V \| W | P) + |R - I(P,V)|^{+} \big\}.\label{eqn:K_DK}
\end{equation}
Note that $K_{\mathrm{DK}}(R) \leq K_{\mathrm{O}}(R)$, since the rate constraint $I(P,V) \geq R$ in Omura's exponent is replaced by the penalty term $|R - I(P,V)|^{+} := \max \{0, R - I(P,V)\}$. The upper bound (achievability part) refines Omura's argument through a codebook extension trick, while the lower bound (converse part) is based on the method of types.
Dueck and K\"orner further asserted, without proof, that $K_{\mathrm{DK}}(R) = K_{\mathrm{A}}(R)$.

In \cite{oohama2012}, Oohama set out to show that $K_{\mathrm{DK}}(R) = K_{\mathrm{A}}(R)$. His argument passes through the intermediate claim that
\begin{equation}
\label{eq:K_DK_K_O_equality}
K_{\mathrm{DK}}(R) = K_{\mathrm{O}}(R) \quad \text{for all} \ R \leq \log |\mathcal{X}|,
\end{equation}
which implies that $K_{\mathrm{O}}(R)$ is tight in this range; see \cite[Property 3(b)]{oohama2012}. The proof of this claim, carried out for a fixed $P$, rests on the assertion that $D(V \| W | P) - I(P,V)$ is  linear in $V$. This function, however, is in fact convex but not linear.\footnote{This gap was noticed in \cite{Tan2025} and confirmed by Prof.\ Oohama to the authors in a private communication.}
Nevertheless, the equality $K_{\mathrm{DK}}(R) = K_{\mathrm{A}}(R)$ does hold, as originally claimed by Dueck and K\"orner, and later confirmed by other means \cite{MosonyiOgawa2017}; see also Section~\ref{sec:dual}.

Determining the full range of equality between $K_{\mathrm{DK}}(R)$ and $K_{\mathrm{O}}(R)$ is a question of interest in its own right,    and one that is not settled by  $K_{\mathrm{DK}}(R) = K_{\mathrm{A}}(R)$.
It concerns the proofs of Dueck and K\"orner \cite[Lemma~2]{Dueck1979} and of Mosonyi and Ogawa~\cite[Lemma~V.11]{MosonyiOgawa2017}, since if \eqref{eq:K_DK_K_O_equality} held, Omura's bound would already be tight and the codebook extension step in their achievability arguments would not be needed.
It is also the strong converse counterpart of  whether the sphere packing and random coding exponents coincide at rates below capacity. The latter has a simple answer, since both exponents are convex in $R$ and agree precisely where the sphere packing exponent has slope no steeper than $-1$, that is, above the critical rate. No such structure is available above capacity: unlike its sphere packing counterpart, $K_{\mathrm{O}}(R)$ is not known to be convex in $R$, as the constraint set $\{V : I(P,V) \geq R\}$ is not convex.

For constant composition codes, Cheng and Nakibo{\u{g}}lu~\cite{ChengNakiboglu2020} proved a refined strong converse bound with polynomial prefactors. Their tilted-channel characterization~\cite[Eqs.~(25)--(28)]{ChengNakiboglu2020} implies that, for a fixed type $P$, the Omura and Dueck--K\"orner exponents coincide up to the onset of the linear branch. Their results do not characterize how far this equality persists along the branch. Here, we establish the range of equality for the exponents optimized over the input distribution, including how far equality persists along this branch.
\subsubsection*{Contributions}
Our main contributions are as follows. 
\begin{list}{}{\setlength{\leftmargin}{1.6em}\setlength{\labelwidth}{1.2em}\setlength{\labelsep}{0.4em}\setlength{\itemsep}{2pt}\setlength{\parsep}{0pt}\setlength{\topsep}{2pt}}
    \item[(i)] We provide a necessary and sufficient condition for $K_{\mathrm{DK}}(R) = K_{\mathrm{O}}(R)$ in terms of the growth of $K_{\mathrm{O}}$.
    \item[(ii)] We show that the two exponents coincide on an interval $[0, R^{\star}]$ and differ strictly on $(R^{\star}, \log|\mathcal{X}|]$, where $R^{\star}$ is given explicitly in terms of the maximum-likelihood structure of $W$ (Theorem~\ref{thm:main}). The proof identifies the role of the endpoint line $R - C_{\infty}(W)$ of Arimoto's family, which $K_{\mathrm{DK}}$ follows from a second threshold $R_{\star} \leq R^{\star}$ onwards. We observe that $R_{\star}$ and $C_{\infty}(W)$ play the roles of a critical rate and a cutoff rate above capacity.
    \item[(iii)] We characterize DMCs for which there is no gap at any rate, and compute the rate thresholds  in closed form for modulo-additive channels and the binary erasure channel.
\end{list}
\section{A Criterion for $K_{\mathrm{DK}} = K_{\mathrm{O}}$}
We start our exposition with a necessary and sufficient condition for equality between $K_{\mathrm{DK}}(R)$ and $K_{\mathrm{O}}(R)$. To this end, the following inf-convolution identity is instrumental.
\begin{lemma}\label{lem:inf_conv}
For every $R \geq 0$,
\begin{equation}
    K_{\mathrm{DK}}(R) = \min_{0 \leq a \leq R}
    \big\{ K_{\mathrm{O}}(a) + (R - a) \big\}.
    \label{eq:inf_conv}
\end{equation}
\end{lemma}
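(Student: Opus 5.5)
The plan is to prove the two inequalities in \eqref{eq:inf_conv} separately, working directly from the definitions \eqref{eqn:K_O} and \eqref{eqn:K_DK}. Throughout, we use the convention that $K_{\mathrm{O}}(a)=\min_{P}\min_{V: I(P,V)\geq a} D(V\|W|P)$. For $a\le R\le\log|\mathcal{X}|$ the constraint set is nonempty: take $P$ uniform and $V$ the identity-like channel, or more generally any $(P,V)$ with $I(P,V)\geq a$. For $R>\log|\mathcal{X}|$ some values of $a$ make $K_{\mathrm{O}}(a)=+\infty$, so those values are simply not selected in the minimum. Existence of all minima follows from compactness of the simplex and lower semicontinuity of $D(V\|W|P)$ together with continuity of $I(P,V)$.

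\emph{Step 1 ($\geq$).} Let $(P^\ast,V^\ast)$ attain the minimum in \eqref{eqn:K_DK}, and set $a:=\min\{R, I(P^\ast,V^\ast)\}\in[0,R]$. Then $I(P^\ast,V^\ast)\geq a$, so $(P^\ast,V^\ast)$ is feasible for $K_{\mathrm{O}}(a)$, which gives $K_{\mathrm{O}}(a)\leq D(V^\ast\|W|P^\ast)$. Moreover $|R-I(P^\ast,V^\ast)|^{+}=R-a$. Hence
\[
K_{\mathrm{DK}}(R)=D(V^\ast\|W|P^\ast)+|R-I(P^\ast,V^\ast)|^{+}\geq K_{\mathrm{O}}(a)+(R-a),
\]
which bounds $K_{\mathrm{DK}}(R)$ below by the right-hand side of \eqref{eq:inf_conv}.

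\emph{Step 2 ($\leq$).} Fix $a\in[0,R]$ with $K_{\mathrm{O}}(a)<\infty$, and let $(P_a,V_a)$ be a minimizer in \eqref{eqn:K_O}, so that $I(P_a,V_a)\geq a$. Since $t\mapsto|R-t|^{+}$ is nonincreasing, we get $|R-I(P_a,V_a)|^{+}\leq |R-a|^{+}=R-a$. Therefore
\[
K_{\mathrm{DK}}(R)\leq D(V_a\|W|P_a)+|R-I(P_a,V_a)|^{+}\leq K_{\mathrm{O}}(a)+R-a.
\]
Minimizing over $a$ yields the reverse inequality.

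\emph{Step 3 (attainment).} It remains to check that the minimum over $a$ is attained. Step 1 already exhibits an $a$ achieving equality, so the infimum is a minimum and no further continuity properties of $K_{\mathrm{O}}$ are needed.

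The only real subtlety is the bookkeeping at the boundary: when $R$ exceeds $\max_{P,V} I(P,V)=\log|\mathcal{X}|$, $K_{\mathrm{O}}(a)$ must be read as $+\infty$ for infeasible $a$. The choice $a=\min\{R,I(P^\ast,V^\ast)\}$ in Step 1 automatically produces a feasible $a$, so this case causes no difficulty.
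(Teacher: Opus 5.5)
Your proof is correct and follows the paper's argument essentially verbatim. For $\geq$ you use the same choice $a=\min\{R,I(P^\ast,V^\ast)\}$, and for $\leq$ you use feasibility of an Omura minimizer $(P_a,V_a)$ together with $|R-I(P_a,V_a)|^{+}\le R-a$; your Step~3 makes explicit the attainment of the minimum over $a$, which the paper leaves implicit. One side remark is inaccurate but harmless: $\max_{P,V} I(P,V)$ can be $\log|\mathcal{Y}|<\log|\mathcal{X}|$, and finiteness of $K_{\mathrm{O}}(a)$ also depends on the supports of $W$, so $K_{\mathrm{O}}(a)=+\infty$ can occur even for $a\le\log|\mathcal{X}|$, a case your Steps~1--2 already handle correctly.
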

\begin{proof}
Let $(P,V)$ attain the minimum defining $K_{\mathrm{DK}}(R)$ and set $ a_{P,V} := \min \{ I(P,V), \, R \} $.
It can be readily verified that $|R - I(P,V)|^{+} = R - a_{P,V}$.
Since $I(P,V) \geq a_{P,V}$, the pair $(P,V)$ is feasible in the minimization defining $K_{\mathrm{O}}(a_{P,V})$, and $D(V \| W | P) \geq K_{\mathrm{O}}(a_{P,V})$.
It follows that
\begin{align*}
 K_{\mathrm{DK}}(R)   & = D(V \| W | P) + |R - I(P,V)|^{+} \\
    & \geq K_{\mathrm{O}}(a_{P,V}) + (R - a_{P,V}) \\
    & \geq \inf_{0 \leq a \leq R} \big\{ K_{\mathrm{O}}(a) + (R - a) \big\}.
\end{align*}

For the other direction, fix $a \in [0,R]$. If $K_{\mathrm{O}}(a) = +\infty$ there is nothing to prove. Otherwise, let $(P_a, V_a)$ attain the minimum defining $K_{\mathrm{O}}(a)$, so that $I(P_a,V_a) \geq a$ and $D(V_a \| W | P_a) = K_{\mathrm{O}}(a)$. Since $I(P_a,V_a) \geq a$ implies $|R - I(P_a,V_a)|^{+} \leq R - a$, we write
\begin{align*}
    K_{\mathrm{DK}}(R) & \leq D(V_a \| W | P_a) + |R - I(P_a,V_a)|^{+} \\
    & \leq K_{\mathrm{O}}(a) + (R - a),
\end{align*}
which holds for all $a \in [0,R]$. This completes the proof.
\end{proof}

Lemma~\ref{lem:inf_conv} is a strong converse analogue of \cite[Lemma~10.4]{Csiszar2011}, which treats the error exponent regime (see also \cite[Eq.~(10.27)]{Csiszar2011}).
Its utility for our purposes is the following.
\begin{proposition}
\label{thm:KDK_KO}
    For every $R\ge0$, 
    \begin{equation}\label{eq:KDK_KO_criterion}
\begin{aligned}
& K_{\mathrm{DK}}(R)=K_{\mathrm{O}}(R) \\
\Longleftrightarrow\quad
& K_{\mathrm{O}}(R)
  \le  K_{\mathrm{O}}(a)+ R-a\quad\forall \,a\in [0,R].
\end{aligned}
\end{equation}
\end{proposition}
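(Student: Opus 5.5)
The plan is to read the criterion directly off the inf-convolution identity of Lemma~\ref{lem:inf_conv}. That identity gives
\begin{equation*}
K_{\mathrm{DK}}(R) = \min_{0 \leq a \leq R}\{K_{\mathrm{O}}(a) + (R-a)\}.
\end{equation*}
The choice $a = R$ is admissible and contributes exactly $K_{\mathrm{O}}(R)$. Hence $K_{\mathrm{DK}}(R) \leq K_{\mathrm{O}}(R)$ always holds, which recovers the inequality already noted in the introduction. Equality is therefore the same as saying that the minimum over $a\in[0,R]$ is attained at $a=R$, that is, that no other $a$ gives a strictly smaller value. This is precisely the right-hand side of \eqref{eq:KDK_KO_criterion}.

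For the direction ($\Rightarrow$), assume $K_{\mathrm{DK}}(R)=K_{\mathrm{O}}(R)$ and fix any $a\in[0,R]$. By Lemma~\ref{lem:inf_conv},
\begin{equation*}
K_{\mathrm{O}}(R) = K_{\mathrm{DK}}(R) \leq K_{\mathrm{O}}(a) + R - a.
\end{equation*}
For the direction ($\Leftarrow$), assume $K_{\mathrm{O}}(R)\leq K_{\mathrm{O}}(a)+R-a$ for every $a\in[0,R]$. Taking the minimum over $a$ and applying Lemma~\ref{lem:inf_conv} gives $K_{\mathrm{O}}(R)\leq K_{\mathrm{DK}}(R)$. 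Combined with the reverse inequality from the $a=R$ term, this yields equality.

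The only point needing care is the convention when $K_{\mathrm{O}}$ is infinite, i.e., when $R$ exceeds the largest achievable mutual information $\log|\mathcal{X}|$, or for intermediate $a$. If $K_{\mathrm{O}}(a)=+\infty$, the inequality for that $a$ holds trivially. If $K_{\mathrm{O}}(R)=+\infty$, then $K_{\mathrm{DK}}(R)$ is still finite, because $D(W\|W|P)=0$ and the penalty term is at most $R$. Both sides of the equivalence are then false, since $a=0$ gives a finite right-hand side $K_{\mathrm{O}}(0)+R=R$. So the equivalence holds in this case as well.

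I expect no genuine obstacle. The work lies entirely in Lemma~\ref{lem:inf_conv}, and in particular in the fact that its minimum is attained, which follows from the existence of the minimizers used in its proof. That fact is not even needed, since the argument goes through with an infimum as well.
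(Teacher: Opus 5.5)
Your proposal is correct and is essentially the paper's proof: both read the criterion directly off Lemma~\ref{lem:inf_conv} by observing that $K_{\mathrm{DK}}(R)=K_{\mathrm{O}}(R)$ exactly when the $a=R$ term attains the minimum in \eqref{eq:inf_conv}. Your separate treatment of infinite values of $K_{\mathrm{O}}$ is a sound addition that the paper leaves implicit. One imprecision: $K_{\mathrm{O}}(R)=+\infty$ can already occur for some $R\le\log|\mathcal{X}|$, e.g.\ when $|\mathcal{Y}|<|\mathcal{X}|$, so your ``i.e.'' is not accurate. This is harmless, since your argument handles the infinite case regardless of its cause.
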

\begin{proof}
By Lemma~\ref{lem:inf_conv}, $K_{\mathrm{DK}}(R) = K_{\mathrm{O}}(R)$ holds if and only if the term $a = R$ attains the minimum in \eqref{eq:inf_conv}, that is, if and only if $K_{\mathrm{O}}(R) \leq K_{\mathrm{O}}(a) + (R - a)$ for all $a \in [0,R]$.
\end{proof}

In other words, Proposition~\ref{thm:KDK_KO} states that $K_{\mathrm{DK}}(R)$ and $K_{\mathrm{O}}(R)$ are equal if and only if $K_{\mathrm{O}}$ does not increase faster than slope one from any $a \in [0,R]$ to $R$. Since $K_{\mathrm{O}}$ is nondecreasing, this is the same as saying that the two exponents coincide on an initial interval $[0,R]$ if and only if $K_{\mathrm{O}}$ is $1$-Lipschitz on $[0,R]$. 

This is the mirror image of the familiar situation for error exponents, where the random coding exponent $E_{\mathrm{r}}(R)$ and the sphere packing exponent $E_{\mathrm{sp}}(R)$ coincide above the critical rate, beyond which $E_{\mathrm{sp}}(R)$ is $1$-Lipschitz; and differ below it. There, both exponents are convex, so the Lipschitz condition reduces to a condition on the slope of $E_{\mathrm{sp}}$ at $R$. On the contrary, the criterion \eqref{eq:KDK_KO_criterion} requires no convexity of $K_{\mathrm{O}}$.
\subsection{Dual-domain}
\label{sec:dual}
To gain further insight, we now transition to the dual domain. The following equalities hold:
\begin{align}
    & K_{\mathrm{DK}} (R)  = \min_{P,V} \big\{ D(V \| W | P) + |R - I(P,V)|^{+} \big\} \label{eq:dual_def} \\
    & = \min_{P,V} \max_{\rho \in [-1,0]} \big\{ D(V \| W | P) + \rho \big( I(P,V) - R \big) \big\} \label{eq:dual_minmax} \\
    & = \! \max_{\rho \in [-1,0]} \! \big\{ \! - \rho R + \min_{P,V} \left\{ D(V \| W | P) + \rho I(P,V)\right\} \big\}. \label{eq:dual_maxmin}
\end{align}
The first equality is the definition of $K_{\mathrm{DK}}(R)$; the second follows from $|t|^{+} = \max_{\rho \in [-1,0]} \{-\rho t\}$; and the third is Sion's minimax theorem, whose conditions are met since the objective is affine in $\rho$ and, for every $\rho \in [-1,0]$, it is also a convex function of $P \times V$.
Focusing on the inner minimization, it is known that, for every $\rho \in [-1,0]$,
\begin{align}
\nonumber
    \min_{P,V} \left\{ D(V \| W | P) + \rho I(P,V)\right\} & = \min_P E_0(\rho,P) \\
    & = \rho \, C_{\frac{1}{1+\rho}}(W),
    \label{eq:inner_min}
\end{align}
where $C_{\alpha}(W)$, for $\alpha \in [1,\infty]$, is the $\alpha$-capacity (R\'enyi capacity of order $\alpha$) of the DMC $W$; see \cite{Csiszar1995,Nakiboglu2019}.

In view of the above, $K_{\mathrm{DK}}(R) = K_{\mathrm{A}}(R)$ can be written as
\begin{equation}
    K_{\mathrm{DK}}(R) = \max_{\rho \in [-1,0]} \rho \big( C_{\frac{1}{1+\rho}}(W) - R\big),
    \label{eq:dual_representation}
\end{equation}
which was also established in \cite{MosonyiOgawa2017} in the more general setting of classical-quantum channels.
Two consequences of \eqref{eq:dual_representation} are relevant to what follows. 
First, $K_{\mathrm{DK}}$ is convex on $[0,\infty)$, being a maximum of affine functions of $R$ (convexity is proved by an operational argument in \cite{oohama2012}). Second, by \eqref{eq:dual_representation}, $K_{\mathrm{DK}}$ is the upper envelope of the lines
\begin{equation}
    K_{\frac{1}{1+\rho}}(R) := \rho \big( C_{\frac{1}{1+\rho}}(W) - R \big), \qquad \rho \in [-1,0],
    \label{eq:K_rho}
\end{equation}
of slopes $-\rho \in [0,1]$. Hence $K_{\mathrm{DK}}$ has slopes in $[0,1]$, and a maximizing $\rho$ in \eqref{eq:dual_representation} gives a subgradient $-\rho$ at $R$.
Going forward, the steepest line $K_{\infty}(R) = R - C_{\infty}(W)$ plays a crucial role, and we call it the {\em endpoint line}.
\subsection{Endpoint line}
\label{sec:endpoint}
At this point, we know that $K_{\infty}(R) \leq K_{\mathrm{DK}}(R) \leq K_{\mathrm{O}}(R)$ for every $R \geq 0$.
Moreover, since $K_{\mathrm{DK}}$ is convex with slopes in $[0,1]$, once it reaches slope one it follows the endpoint line $K_{\infty}$; whereas by Proposition~\ref{thm:KDK_KO}, a gap opens where $K_{\mathrm{O}}$ increases faster than slope one. This suggests studying where each exponent meets the endpoint line.

Note that at $\rho = -1$, the quantities $E_0(-1,P)$ and $C_{\infty}(W)$ are defined by the corresponding limits. Arimoto \cite{Arimoto1973} identified $\min_P E_0(-1,P) = - \log \sum_{y} \max_{x} W(y|x)$, which implies
\begin{equation}
    C_{\infty}(W) = \log \sum_{y} \max_{x} W(y|x).
    \label{eq:C_inf}
\end{equation}
For $P$ with full support, $C_\infty(W)$ is also known as the {\em maximal leakage} from $X$ to $Y$~\cite{IssaWagnerKamath2020}.
To determine where the two exponents meet the endpoint line $K_{\infty}(R)$, we re-derive \eqref{eq:C_inf} from \eqref{eq:inner_min} by a direct argument. The main purpose is to identify the pairs $(P,V)$ that attain the minimum of $D(V \| W | P) - I(P,V)$.
Let
\begin{equation*}
    Q^{\star}(y) := \mathrm{e}^{-C_{\infty}(W)} \max_{x} W(y|x),
\end{equation*}
a probability distribution on $\mathcal{Y}$. Moreover, for each $x \in \mathcal{X}$, let
\begin{equation}
    \mathcal{Y}_x := \big\{ y \in \mathcal{Y} : W(y|x) = \max\nolimits_{x'} W(y|x') \big\}
    \label{eq:ML_outputs}
\end{equation}
be the set of outputs for which $x$ is a maximum-likelihood input. Consider the set of pairs
\begin{equation}
\nonumber
    \mathcal{M} := \big\{ (P,V) : PV = Q^{\star} \ \text{and} \ 
     P(x)V(y|x) > 0 \Rightarrow y \in \mathcal{Y}_x \big\}.
\end{equation}
The set $\mathcal{M}$ is non-empty, since for any map $\phi : \mathcal{Y} \to \mathcal{X}$ with $W(y|\phi(y)) = \max_{x} W(y|x)$ for all $y$, the pair $(P,V)$ with $P(x)V(y|x) = Q^{\star}(y) \, \mathbbm{1}\{\phi(y) = x\}$ belongs to $\mathcal{M}$.
\begin{lemma}
\label{lem:endpoint_line}
For every pair $(P,V)$,
\begin{equation}
    D(V \| W | P) - I(P,V) \geq - C_{\infty}(W),
    \label{eq:endpoint_bound}
\end{equation}
with equality if and only if $(P,V) \in \mathcal{M}$.
\end{lemma}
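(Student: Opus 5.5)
We rewrite the objective as a single divergence-type sum and then apply Jensen's inequality together with a pointwise bound. Write $Q := PV$ for the output distribution induced by $(P,V)$. Expanding the definitions gives
\begin{align}
D(V \| W | P) - I(P,V) &= \sum_{x,y} P(x)V(y|x) \Big[ \log \frac{V(y|x)}{W(y|x)} - \log \frac{V(y|x)}{Q(y)} \Big] \\
&= \sum_{x,y} P(x)V(y|x) \log \frac{Q(y)}{W(y|x)} .
\end{align}
We may assume $P(x)V(y|x)>0 \Rightarrow W(y|x)>0$; otherwise $D(V\|W|P)=+\infty$ and the inequality is trivial, while such a pair is not in $\mathcal{M}$ either, since $y\in\mathcal{Y}_x$ with $Q^\star(y)>0$ forces $W(y|x)>0$.

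The key step is the pointwise bound $W(y|x) \le \max_{x'} W(y|x') = \mathrm{e}^{C_\infty(W)} Q^\star(y)$. It yields
\begin{equation}
\sum_{x,y} P(x)V(y|x) \log \frac{Q(y)}{W(y|x)} \ \ge\ \sum_{y} Q(y) \log \frac{Q(y)}{Q^\star(y)} - C_\infty(W) = D(Q\|Q^\star) - C_\infty(W).
\end{equation}
Since $D(Q\|Q^\star) \ge 0$, we obtain \eqref{eq:endpoint_bound}. If $Q^\star(y)=0$ for some $y$ with $Q(y)>0$, then $W(y|x)=0$ for all $x$, and this case was already excluded above.

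For the equality case, note that equality holds if and only if both inequalities are tight. The first is tight iff $W(y|x) = \max_{x'}W(y|x')$ whenever $P(x)V(y|x)>0$, i.e., iff $y \in \mathcal{Y}_x$ on the support of $P \times V$. The second is tight iff $D(Q\|Q^\star)=0$, i.e., iff $PV = Q^\star$. These two conditions together are exactly the definition of $\mathcal{M}$, so the "if and only if" follows.

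No step is a genuine obstacle. The only care needed is the bookkeeping of zero-probability terms: the conventions $0\log 0 = 0$, the infinite-divergence case, and the fact that the first inequality involves only terms with positive weight. These make the tightness conditions hold exactly on the support of $P\times V$.
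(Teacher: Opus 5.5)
Your proof is correct and follows the paper's argument. You use the same identity $D(V\|W|P)-I(P,V) = -H(PV)-\sum_{x,y}P(x)V(y|x)\log W(y|x)$ and the same two inequalities: Gibbs' inequality against $Q^\star$, and the pointwise maximum-likelihood bound $W(y|x)\le \mathrm{e}^{C_\infty(W)}Q^\star(y)$, applied in the opposite order, while also explicitly handling the case $D(V\|W|P)=+\infty$, which the paper leaves implicit.
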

\begin{proof}
Expanding $I(P,V)$ and $D(V \| W | P)$, we get
\begin{align}
    &D(V \| W | P) - I(P,V)  \notag\\
      &\; =- H(PV) - \sum_{x,y} P(x) V(y|x) \log W(y|x).
    \label{eq:g_expression}
\end{align}
By the Gibbs inequality, $-H(PV) \geq \sum_{y} PV(y) \log Q^{\star}(y)$, with equality if and only if $PV = Q^{\star}$. Substituting into \eqref{eq:g_expression},
\begin{equation}
    D(V \| W | P) - I(P,V) \geq \sum_{x,y} P(x) V(y|x) \log \frac{Q^{\star}(y)}{W(y|x)}.
    \label{eq:step_gibbs}
\end{equation}
For every $(x,y)$ with $P(x)V(y|x) > 0$,
\begin{equation}
    \frac{Q^{\star}(y)}{W(y|x)} = \mathrm{e}^{-C_{\infty}(W)} \frac{\max_{x'} W(y|x')}{W(y|x)} \geq \mathrm{e}^{-C_{\infty}(W)},
    \label{eq:ratio_bound}
\end{equation}
with equality if and only if $y \in \mathcal{Y}_x$. Substituting \eqref{eq:ratio_bound} into \eqref{eq:step_gibbs} gives \eqref{eq:endpoint_bound}. Equality in \eqref{eq:endpoint_bound} holds if and only if $PV = Q^{\star}$ and equality holds in \eqref{eq:ratio_bound} for every $(x,y)$ with $P(x)V(y|x) > 0$, which is the definition of $\mathcal{M}$.
\end{proof}

Lemma~\ref{lem:endpoint_line} establishes \eqref{eq:inner_min} at $\rho = -1$ and identifies $\mathcal{M}$ as the set of its minimizers. Moreover, by \eqref{eq:g_expression}, $D(V \| W | P) - I(P,V)$ is a convex function of $P \times V$ (this is the case $\rho = - 1$ of the convexity in \eqref{eq:dual_maxmin}). Finally, every $(P,V) \in \mathcal{M}$ satisfies
\begin{equation}
    I(P,V) = D(V \| W | P) + C_{\infty}(W).
    \label{eq:I_on_M}
\end{equation}
\begin{lemma}
\label{lem:interval}
Viewed as a set of joint distributions $P\times V$, $\mathcal M$ is convex and compact.  Moreover, $\{ I(P,V) : (P,V) \in \mathcal{M} \}$ is equal to the interval $ [R_{\star}, R^{\star}]$, where 
\begin{equation}
    R_{\star} := \min_{(P,V) \in \mathcal{M}} I(P,V), \qquad
    R^{\star} := \max_{(P,V) \in \mathcal{M}} I(P,V).
    \label{eq:R_star_def}
\end{equation}
Finally, $C_{\infty}(W) \leq R_{\star} \leq R^{\star} \leq \min\{ \log|\mathcal{X}|, H(Q^{\star}) \}$.
\end{lemma}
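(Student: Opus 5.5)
We work with joint distributions $J = P\times V$ on $\mathcal{X}\times\mathcal{Y}$, identifying $(P,V)$ with $J$. Any conditional $V(\cdot|x)$ for $x$ outside the support of $P$ is irrelevant for both $I(P,V)$ and $D(V\|W|P)$. In these terms, $\mathcal{M}$ is the set of joint distributions $J$ that satisfy two conditions: the $\mathcal{Y}$-marginal is $Q^{\star}$, and $J(x,y)=0$ whenever $y\notin\mathcal{Y}_x$. Both conditions are linear equalities in $J$, and the probability simplex is compact and convex. Hence $\mathcal{M}$ is a closed, bounded polytope, so it is convex and compact. It is nonempty by the map $\phi$ construction given before Lemma~\ref{lem:endpoint_line}.

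For the interval claim, note that $J\mapsto I(P,V)$ is continuous on the simplex, being the mutual information of a joint distribution on finite alphabets. A convex set is connected, and a continuous function on a compact connected set has an image that is a compact connected subset of $\mathbb{R}$, i.e.\ a closed interval. So the minimum $R_\star$ and maximum $R^\star$ are attained, and every value between them is attained. As a direct alternative, one can take the line segment $J_t=(1-t)J_0+tJ_1$ between a minimizer and a maximizer; it stays in $\mathcal{M}$ by convexity, and the intermediate value theorem applies along it.

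For the bounds, the upper bounds are standard. First, $I(P,V)\le H(P)\le\log|\mathcal{X}|$. Second, $I(P,V)\le H(PV)=H(Q^\star)$ on $\mathcal{M}$. For the lower bound, use \eqref{eq:I_on_M}: every $(P,V)\in\mathcal{M}$ satisfies $I(P,V)=D(V\|W|P)+C_\infty(W)\ge C_\infty(W)$, since divergence is nonnegative. Hence $R_\star\ge C_\infty(W)$, and $R_\star\le R^\star$ is trivial.

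The main point requiring care is the identification of pairs with joint distributions. Convexity must be stated for $P\times V$ rather than for $(P,V)$ separately, because the constraint on $V$ is only meaningful on the support of $P$. Once we work with $J$, all constraints are linear and everything else is routine.
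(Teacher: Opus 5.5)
Your proposal is correct and follows the paper's proof essentially step for step. Linear constraints make $\mathcal{M}$ a nonempty convex compact set, continuity of $I$ on it gives the closed interval $[R_\star,R^\star]$, and the bounds come from \eqref{eq:I_on_M}, $I(P,V)\le H(P)\le\log|\mathcal{X}|$, and $I(P,V)\le H(PV)=H(Q^\star)$. Your explicit appeal to connectedness (or to the segment argument) only spells out a step the paper leaves implicit.
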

\begin{proof}
The two conditions in the definition of $\mathcal{M}$ are linear constraints on $P \times V$, so $\mathcal{M}$ is convex and compact. Since $\mathcal{M}$ is also non-empty and $I(P,V)$ is a continuous function of $P \times V$, the image of $\mathcal{M}$ under $I(\fndot,\fndot)$ is a closed interval, whose endpoints are $R_{\star}$ and $R^{\star}$ by \eqref{eq:R_star_def}. 

Finally, \eqref{eq:I_on_M} gives $I(P,V) \geq C_{\infty}(W)$ on $\mathcal{M}$, while $I(P,V) \leq \log|\mathcal{X}|$ and $I(P,V) \leq  H(Q^{\star})$ on $\mathcal{M}$.
\end{proof}
\section{Main Result}
\label{sec:main}
We now determine where each exponent meets the endpoint line, and settle the question of their equality.
\begin{theorem}
\label{thm:main}
For every $R \geq 0$,
\begin{list}{}{\setlength{\leftmargin}{1.6em}\setlength{\labelwidth}{1.2em}\setlength{\labelsep}{0.4em}\setlength{\itemsep}{0pt}\setlength{\parsep}{0pt}\setlength{\topsep}{2pt}}
    \item[(a)] ${K_{\mathrm{DK}}(R) = K_{\infty}(R)}$ if and only if ${R \geq R_{\star}}$;
    \item[(b)] ${K_{\mathrm{O}}(R) = K_{\infty}(R)}$ if and only if ${R_{\star} \leq R \leq R^{\star}}$;
    \item[(c)] ${K_{\mathrm{O}}(R) = K_{\mathrm{DK}}(R)}$ if and only if ${R \leq R^{\star}}$.
\end{list}
Consequently, ${K_{\mathrm{O}}(R) > K_{\mathrm{DK}}(R) = K_{\infty}(R)}$ for ${R > R^{\star}}$.
\end{theorem}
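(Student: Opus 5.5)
The plan is to prove (b) first, then (a), then (c), all resting on the pointwise inequality of Lemma~\ref{lem:endpoint_line}: for every $(P,V)$,
\begin{equation*}
D(V\|W|P) \ge I(P,V) - C_\infty(W),
\end{equation*}
with equality exactly on $\mathcal M$.

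For (b), take any pair feasible for $K_{\mathrm O}(R)$, i.e.\ with $I(P,V)\ge R$. The inequality gives $D(V\|W|P)\ge I(P,V)-C_\infty(W)\ge K_\infty(R)$. Equality in the overall bound therefore requires both $(P,V)\in\mathcal M$ and $I(P,V)=R$. Since the minimum defining $K_{\mathrm O}(R)$ is attained by compactness, $K_{\mathrm O}(R)=K_\infty(R)$ holds if and only if $R\in\{I(P,V):(P,V)\in\mathcal M\}$. By Lemma~\ref{lem:interval} this set is $[R_\star,R^\star]$, which gives (b).

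For (a), apply the same argument to the Dueck--K\"orner objective. For every pair,
\begin{align*}
D(V\|W|P)+|R-I|^+ &\ge I-C_\infty+|R-I|^+\\
&\ge R-C_\infty .
\end{align*}
Equality requires $(P,V)\in\mathcal M$ together with $I(P,V)\le R$. Such a pair exists if and only if $R\ge R_\star$, since $R_\star$ is the minimum of $I$ over $\mathcal M$. This proves (a), again using attainment of the minimum.

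For (c), there are two directions.
\begin{itemize}
\item[] \emph{Case $R>R^\star$.} By (a), $R\ge R_\star$, so $K_{\mathrm{DK}}(R)=K_\infty(R)$. By (b), $K_{\mathrm O}(R)\ne K_\infty(R)$. Hence $K_{\mathrm O}(R)>K_{\mathrm{DK}}(R)$, which is also the final consequence.
\item[] \emph{Case $R\le R^\star$.} Here I need equality. For $R\in[R_\star,R^\star]$ it follows from (a) and (b). For $R<R_\star$ I will use Lemma~\ref{lem:inf_conv} and Proposition~\ref{thm:KDK_KO}.
\end{itemize}

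The case $R<R_\star$ is the main obstacle. The natural route is to take the optimal $(P,V)$ for $K_{\mathrm{DK}}(R)$ and show it can be taken with $I(P,V)\ge R$, so that it is feasible for Omura's problem.

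Suppose instead the optimizer has $I(P,V)=a<R$. Since $R<R_\star$, the pair is not in $\mathcal M$. I would perturb it along the segment toward some $(P',V')\in\mathcal M$, using the convex mixture $(1-t)P\times V+t P'\times V'$. Write $g=D-I$, which is convex by \eqref{eq:g_expression}. Along the segment, $D=g+I$, so the DK objective in the region $I<R$ equals $g+R$. By convexity, $g$ decreases from its value at $(P,V)$ toward its minimum $-C_\infty$ at $t=1$.

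So moving along the segment strictly decreases the objective as long as $I$ stays below $R$. By continuity of $I$ and the fact that $I=R^\star\ge R_\star>R$ at $t=1$, there is a first $t$ where $I=R$. This contradicts optimality, unless the minimizer already has $I\ge R$.

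Once $I(P,V)\ge R$ is established, the penalty vanishes, and
\begin{equation*}
K_{\mathrm{DK}}(R)=D(V\|W|P)\ge K_{\mathrm O}(R).
\end{equation*}
Combined with $K_{\mathrm{DK}}\le K_{\mathrm O}$, this gives equality.

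The delicate point is strict decrease of $g$ along the segment. Convexity of $g$ with $g(\text{start})>-C_\infty=g(\text{end})$ does give strict decrease for small $t>0$. I must also make sure the first crossing point is no worse than the start. Both $g$ and the objective at that point are at most their values at the start, so the crossing point has objective at most that of the optimizer, with $I=R$. That is enough: it is itself an optimizer with $I\ge R$. The same argument covers $R\le R^\star$ generally, picking $(P',V')\in\mathcal M$ with $I=R^\star$.
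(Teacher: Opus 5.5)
Your proposal is correct and follows essentially the same route as the paper. Parts (a) and (b) come from the equality case of Lemma~\ref{lem:endpoint_line} together with Lemma~\ref{lem:interval}. The direction $R\le R^\star$ of (c) is the same Csisz\'ar-type segment argument: you mix the Dueck--K\"orner minimizer with a point of $\mathcal M$ having $I=R^\star$, and convexity of $D-I$ then gives $K_{\mathrm O}(R)\le K_{\mathrm{DK}}(R)$ at the crossing point where $I=R$. The announced appeal to Lemma~\ref{lem:inf_conv} and Proposition~\ref{thm:KDK_KO} is never needed. As you eventually note, the non-strict bound $g(t)\le g(0)$ at the crossing already suffices, so you can drop the contradiction framing and the separate treatment of $R<R_\star$.
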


Theorem~\ref{thm:main} is illustrated schematically in
Fig.~\ref{fig:strong_converse_rates} for the case $R_\star<R^\star<\log|\mathcal X|$. Both exponents follow the endpoint line on $[R_{\star}, R^{\star}]$. Beyond $R^{\star}$, $K_{\mathrm{DK}}$ continues along the line, while $K_{\mathrm{O}}$ lies strictly above it. 

In terms of the inf-convolution identity \eqref{eq:inf_conv}, the theorem identifies the minimizing $a$. For $R \leq R^{\star}$, the minimum is attained at $a = R$. For $R \geq R_{\star}$, it is attained at every $a$ in $ [R_{\star}, \min\{R, R^{\star}\}]$, since there $K_{\mathrm{O}}(a) + (R - a) = K_{\infty}(R)$ by (a) and (b). In particular, for $R > R^{\star}$ the criterion \eqref{eq:KDK_KO_criterion} fails with $a = R^{\star}$. 
We now proceed with the proof of Theorem \ref{thm:main}.
\begin{figure}[!t]
    \centering

\begingroup
\definecolor{SCrateInk}{HTML}{192938}
\definecolor{SCrateMuted}{HTML}{65717D}
\definecolor{SCrateBlue}{HTML}{205E95}
\definecolor{SCrateOrange}{HTML}{C26020}
\definecolor{SCrateGreen}{HTML}{39805A}
\definecolor{SCrateGap}{HTML}{F9E8D9}

\def\SCcapacity{0.90}
\def\SClower{2.50}
\def\SCupper{3.60}
\def\SCrateLimit{4.80}
\def\SCtop{4.40}
\def\SCgapBend{0.80}
\def\SCgapArrow{4.62}
\pgfmathsetmacro{\SCcinfty}{(\SCcapacity+\SClower)/2}
\pgfmathsetmacro{\SClowerY}{\SClower-\SCcinfty}
\pgfmathsetmacro{\SCupperY}{\SCupper-\SCcinfty}
\pgfmathsetmacro{\SClimitDK}{\SCrateLimit-\SCcinfty}
\pgfmathsetmacro{\SClimitKO}{\SCrateLimit-\SCcinfty
    +\SCgapBend*(\SCrateLimit-\SCupper)^2}

\begin{tikzpicture}[
    x=1.42cm,y=0.80cm,
    font=\footnotesize,
    text=SCrateInk,
    >={Latex[length=1.6mm]},
    every node/.style={inner sep=1.5pt},
    dk/.style={SCrateBlue,line width=1.3pt},
    omura/.style={SCrateOrange,line width=0.9pt,
                  dash pattern=on 3.5pt off 2.5pt},
    endpoint/.style={SCrateGreen,line width=0.9pt,
                     dash pattern=on 4pt off 1.6pt on 0.7pt off 1.6pt},
    guide/.style={SCrateMuted!55,line width=0.4pt,densely dashed},
    callout/.style={SCrateMuted,line width=0.5pt,->}
]
    \path[use as bounding box]
    (-0.55,-0.55) rectangle (5.35,{\SCtop+0.2});

    \path[fill=SCrateGap,draw=none]
        (\SCupper,\SCupperY)
        -- plot[domain=\SCupper:\SCrateLimit,samples=80]
            (\x,{\x-\SCcinfty+\SCgapBend*(\x-\SCupper)^2})
        -- (\SCrateLimit,\SClimitDK) -- cycle;

    \draw[->,SCrateInk,line width=0.55pt]
        (0,0) -- ({\SCrateLimit+0.38},0) node[anchor=north west,inner sep=1pt] {$R$};
   \draw[->,SCrateInk,line width=0.55pt]
    (0,0) -- (0,{\SCtop+0.15})
    node[midway,sloped,above=5pt] {Exponent};

    \draw[guide] (\SClower,0) -- (\SClower,\SClowerY);
    \draw[guide] (\SCupper,0) -- (\SCupper,\SCupperY);
    \draw[guide] (\SCrateLimit,0) -- (\SCrateLimit,\SClimitKO);

    \draw[endpoint] (\SCcinfty,0) -- (\SCrateLimit,\SClimitDK);

    \draw[dk] (0,0) -- (\SCcapacity,0);
    \draw[dk] plot[domain=\SCcapacity:\SClower,samples=90]
        (\x,{(\x-\SCcapacity)^2/(2*(\SClower-\SCcapacity))});
    \draw[dk] (\SClower,\SClowerY) -- (\SCrateLimit,\SClimitDK);

    \draw[omura] (0,0) -- (\SCcapacity,0);
    \draw[omura] plot[domain=\SCcapacity:\SClower,samples=90]
        (\x,{(\x-\SCcapacity)^2/(2*(\SClower-\SCcapacity))});
    \draw[omura] (\SClower,\SClowerY) -- (\SCupper,\SCupperY);
    \draw[omura] plot[domain=\SCupper:\SCrateLimit,samples=90]
        (\x,{\x-\SCcinfty+\SCgapBend*(\x-\SCupper)^2});

    \foreach \rate/\label in {
        \SCcapacity/{C},
        \SCcinfty/{C_\infty(W)},
        \SClower/{R_\star},
        \SCupper/{R^\star},
        \SCrateLimit/{\log|\mathcal X|}
    } {
        \draw[SCrateInk,line width=0.45pt] (\rate,-0.06) -- (\rate,0.06);
        \node[anchor=north,fill=white,inner sep=1pt] at (\rate,-0.12) {$\label$};
    }

    \fill[SCrateBlue] (\SCcapacity,0) circle[radius=1.3pt];
    \draw[SCrateGreen,fill=white,line width=0.7pt] (\SCcinfty,0) circle[radius=1.4pt];
    \draw[SCrateBlue,fill=white,line width=0.9pt] (\SClower,\SClowerY) circle[radius=1.8pt];
    \draw[SCrateBlue,fill=white,line width=0.9pt] (\SCupper,\SCupperY) circle[radius=1.8pt];
    \fill[SCrateOrange] (\SCrateLimit,\SClimitKO) circle[radius=1.5pt];
    \fill[SCrateBlue] (\SCrateLimit,\SClimitDK) circle[radius=1.5pt];

    \draw[dk] (0.22,4.10) -- (0.52,4.10);
    \node[anchor=west,text=SCrateBlue] at (0.56,4.10) {$K_{\mathrm{DK}}(R)$};
    \draw[omura] (0.22,3.62) -- (0.52,3.62);
    \node[anchor=west,text=SCrateOrange] at (0.56,3.62) {$K_{\mathrm O}(R)$};
    \draw[endpoint] (0.22,3.14) -- (0.52,3.14);
    \node[anchor=west,text=SCrateGreen] at (0.56,3.14) {$K_{\infty}(R)=R-C_\infty(W)$};

    \node[align=center] (common) at (2.45,2.45)
        {$K_{\mathrm{DK}}=K_{\mathrm O}=K_{\infty}$\\[1pt] on $[R_\star,R^\star]$};
    \draw[callout] (common.south) -- (3.0,{3.0-\SCcinfty+0.06});

    \node[text=SCrateOrange,anchor=south east] at (4.35,3.75) {$K_{\mathrm O}(R)$};
    \path (3.85,{3.85-\SCcinfty}) -- (\SCrateLimit,\SClimitDK)
        node[pos=0.5,sloped,below=3pt,text=SCrateBlue] {$K_{\mathrm{DK}}(R)=K_{\infty}(R)$};
    \draw[<->,SCrateOrange,line width=0.55pt]
        (\SCgapArrow,{\SCgapArrow-\SCcinfty+0.07})
        -- (\SCgapArrow,{\SCgapArrow-\SCcinfty+\SCgapBend*(\SCgapArrow-\SCupper)^2-0.07});
\end{tikzpicture}
\endgroup
    \caption{Schematic illustration of Theorem~\ref{thm:main} when $R_\star<R^\star<\log|\mathcal X|$. The two exponents coincide up to $R^\star$ and share the endpoint line on $[R_\star,R^\star]$. The shaded region indicates the strict gap for $R>R^\star$.}
    \label{fig:strong_converse_rates}
\end{figure}
\begin{proof}
We prove each statement separately.

\emph{Part (a).} Fix $(P,V)$. If $I(P,V) \leq R$, then by Lemma~\ref{lem:endpoint_line},
\begin{align*}
    & D(V \| W | P) + |R - I(P,V)|^{+} \\
    & = D(V \| W | P) - I(P,V) + R \geq K_{\infty}(R),
\end{align*}
with equality if and only if $(P,V) \in \mathcal{M}$. On the other hand, if $I(P,V) > R$, then by Lemma~\ref{lem:endpoint_line},
\begin{align*}
    & D(V \| W | P) + |R - I(P,V)|^{+} \\
    & = D(V \| W | P) \geq I(P,V) - C_{\infty}(W) > K_{\infty}(R).
\end{align*}
Minimizing over $(P,V)$ gives $K_{\mathrm{DK}}(R) \geq K_{\infty}(R)$, with equality if and only if $I(P,V) \leq R$ is satisfied by some $(P,V) \in \mathcal{M}$. By Lemma~\ref{lem:interval}, this holds if and only if $R \geq R_{\star}$.

\emph{Part (b).} Fix a pair $(P,V)$ with $I(P,V) \geq R$. By Lemma~\ref{lem:endpoint_line},
\begin{equation*}
    D(V \| W | P) \geq I(P,V) - C_{\infty}(W) \geq K_{\infty}(R),
\end{equation*}
with equality throughout if and only if $(P,V) \in \mathcal{M}$ and $I(P,V) = R$. Minimizing over such pairs gives $K_{\mathrm{O}}(R) \geq K_{\infty}(R)$, with equality if and only if some $(P,V) \in \mathcal{M}$ has $I(P,V) = R$, i.e., by Lemma~\ref{lem:interval}, if and only if $R \in [R_{\star}, R^{\star}]$.

\emph{Part (c).} For $R > R^{\star}$, parts (a) and (b) give $K_{\mathrm{DK}}(R) = K_{\infty}(R) < K_{\mathrm{O}}(R)$, proving the ``only if'' part of the statement. 

Now for the ``if'' part, let $R \leq R^{\star}$. It suffices to show $K_{\mathrm{O}}(R) \leq K_{\mathrm{DK}}(R)$. Let $(P_0,V_0)$ attain the minimum in \eqref{eq:dual_def}. If $I(P_0,V_0) \geq R$, then $(P_0,V_0)$ is feasible in the minimization defining $K_{\mathrm{O}}(R)$, and $K_{\mathrm{O}}(R) \leq D(V_0 \| W | P_0) = K_{\mathrm{DK}}(R)$. Otherwise, we have $I(P_0,V_0) < R$ and
\begin{equation}
    K_{\mathrm{DK}}(R) = D(V_0 \| W | P_0) - I(P_0,V_0) + R.
    \label{eq:K_DK_at_P0}
\end{equation}
By Lemma~\ref{lem:interval}, there is $(P_1,V_1) \in \mathcal{M}$ with $I(P_1,V_1) = R^{\star} \geq R$. For $t \in [0,1]$, let $(P_t,V_t)$ be the pair with
\begin{equation*}
    P_t \times V_t := (1-t) \, (P_0 \times V_0) + t \, (P_1 \times V_1) .
\end{equation*}
By continuity of $I(P_t,V_t)$ in $t$, there is $t \in (0,1]$ with $I(P_t,V_t) = R$. By convexity of $D(V \| W | P) - I(P,V)$ in $P \times V$ and Lemma~\ref{lem:endpoint_line},
\begin{align}
\nonumber
    & D(V_t \| W | P_t) - I(P_t,V_t) \\
\nonumber
    & \leq (1-t) \big( D(V_0 \| W | P_0) - I(P_0,V_0) \big) - t \, C_{\infty}(W) \\
    & \leq D(V_0 \| W | P_0) - I(P_0,V_0).
    \label{eq:segment_convexity}
\end{align}
Adding $I(P_t,V_t) = R$ to both sides and using \eqref{eq:K_DK_at_P0} gives $D(V_t \| W | P_t) \leq K_{\mathrm{DK}}(R)$. Since $(P_t,V_t)$ is feasible in the minimization defining $K_{\mathrm{O}}(R)$, then $K_{\mathrm{O}}(R) \leq K_{\mathrm{DK}}(R)$.
\end{proof}

The proof of Theorem~\ref{thm:main} uses only Lemmas~\ref{lem:endpoint_line} and~\ref{lem:interval}. In particular, it uses neither \eqref{eq:dual_representation} nor the convexity of $K_{\mathrm{DK}}$. The segment argument in part (c) is due to Csisz\'ar \cite[Eq.~(A23)]{Csiszar1995}, who used it for fixed $P$; it extends to variable $P$ by the joint convexity of $D(V \| W | P) - I(P,V)$ in $P \times V$.
\subsection{Critical and Cutoff Rates Above Capacity}
In the dual domain, part (a) states that $\rho = -1$ attains the maximum in \eqref{eq:dual_representation} if and only if $R \geq R_{\star}$. 
By part~(a), $K_{\mathrm{DK}}$ is linear with slope one
on $[R_\star,\infty)$. Its convexity implies that its
right derivative is  less than one for
$R<R_\star$.  Thus $R_{\star}$ plays the role of a critical rate above capacity, while $C_{\infty}(W)$, where $K_\infty(R)$ meets the rate axis, plays the role of a cutoff rate. 


This is analogous to the initial linear branch of the
random coding exponent below capacity, whose rate-axis
intercept is the cutoff rate $R_0=C_{1/2}(W)$.
Note that in the terminology of Csisz\'ar \cite{Csiszar1995}, $C_{1/2}(W)$ and $C_{\infty}(W)$ are the $\beta$-cutoff rates for $\beta = -1$ and $\beta = 1$, respectively.
\subsection{Absence of a Gap}
By part (c) of Theorem~\ref{thm:main} and Lemma~\ref{lem:interval}, $K_{\mathrm{O}}$ and $K_{\mathrm{DK}}$ agree on $[0,\log|\mathcal{X}|]$ if and only if $R^{\star} = \log|\mathcal{X}|$. This condition can be made explicit.
\begin{corollary}
\label{cor:no_gap}
$K_{\mathrm{O}}(R) = K_{\mathrm{DK}}(R)$ for all $R \in [0,\log|\mathcal{X}|]$ if and only if there exists a maximum-likelihood map $\phi$, i.e., $W(y|\phi(y)) = \max_{x} W(y|x)$ for every $y$, such that 
\begin{equation}
    \sum_{y : \phi(y) = x} \max_{x'} W(y|x') = \frac{\mathrm{e}^{C_{\infty}(W)}}{|\mathcal{X}|}
    \label{eq:equal_mass}
\end{equation}
holds for every input $x \in \mathcal{X}$.
\end{corollary}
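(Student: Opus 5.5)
By part (c) of Theorem~\ref{thm:main} and Lemma~\ref{lem:interval}, the statement reduces to showing that $R^{\star} = \log|\mathcal{X}|$ if and only if a maximum-likelihood map $\phi$ satisfying \eqref{eq:equal_mass} exists. Indeed, if $R^\star=\log|\mathcal X|$ then (c) gives equality on all of $[0,\log|\mathcal X|]$. Conversely, if $R^\star<\log|\mathcal X|$, then (c) gives strict inequality on $(R^\star,\log|\mathcal X|]$, which is a nonempty interval. So the plan is to characterize when some $(P,V)\in\mathcal M$ attains $I(P,V)=\log|\mathcal X|$.

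For the ``if'' direction, take $\phi$ as in the statement and use the pair built in the paper before Lemma~\ref{lem:endpoint_line}, namely $P(x)V(y|x)=Q^\star(y)\mathbbm{1}\{\phi(y)=x\}$. This pair lies in $\mathcal M$. Its input marginal is $P(x)=\sum_{y:\phi(y)=x}Q^\star(y)=\mathrm e^{-C_\infty(W)}\sum_{y:\phi(y)=x}\max_{x'}W(y|x')$, which by \eqref{eq:equal_mass} equals $1/|\mathcal X|$. Since $X=\phi(Y)$ is a deterministic function of $Y$, we get $H(X|Y)=0$. Hence $I(P,V)=H(P)=\log|\mathcal X|$, so $R^\star=\log|\mathcal X|$ by the upper bound in Lemma~\ref{lem:interval}.

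For the ``only if'' direction, suppose $(P,V)\in\mathcal M$ with $I(P,V)=\log|\mathcal X|$. Since $I=H(P)-H(X|Y)\le\log|\mathcal X|$, equality forces both $P$ to be uniform and $H(X|Y)=0$. The latter means that for every $y$ with $Q^\star(y)>0$ there is a unique $x$ with $P(x)V(y|x)>0$; call it $\phi(y)$. By the support condition in $\mathcal M$, $y\in\mathcal Y_{\phi(y)}$, so $\phi$ is maximum likelihood. For outputs with $Q^\star(y)=0$, i.e.\ $\max_x W(y|x)=0$, every $x$ is an ML input, so $\phi(y)$ can be defined arbitrarily. These outputs contribute nothing to the sums in \eqref{eq:equal_mass}. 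Then $P(x)=\sum_{y:\phi(y)=x}Q^\star(y)=1/|\mathcal X|$, using $PV=Q^\star$ and the deterministic relation between $Y$ and $X$, and multiplying by $\mathrm e^{C_\infty(W)}$ gives \eqref{eq:equal_mass}.

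The main obstacle is the bookkeeping in the ``only if'' direction. One must handle the zero-probability outputs carefully when extending $\phi$ to all of $\mathcal Y$, and must confirm that attainment of the maximum $R^\star$ (guaranteed by compactness in Lemma~\ref{lem:interval}) supplies an actual pair with $I=\log|\mathcal X|$. The rest follows directly from earlier results.
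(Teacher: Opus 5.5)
Your proposal is correct and takes essentially the same route as the paper. You reduce via Theorem~\ref{thm:main}(c) and Lemma~\ref{lem:interval} to whether some $(P,V)\in\mathcal M$ has $I(P,V)=\log|\mathcal X|$, observe that this forces $P$ uniform and $X$ a function of $Y$ (which becomes the equal-mass condition), and prove the converse with the explicit ML-map construction. Your handling of outputs with $Q^\star(y)=0$ is bookkeeping the paper leaves implicit, and you do it correctly.
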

\begin{proof}
There is no gap if and only if some $(P,V) \in \mathcal{M}$ has $I(P,V) = \log|\mathcal{X}|$. Since $I(P,V) \leq H(P) \leq \log|\mathcal{X}|$, this requires $P$ to be uniform and $I(P,V) = H(P)$, i.e., $X$ is a function of $Y$ under $P \times V$. 

Denoting this function by $\phi$, the support condition in the definition of $\mathcal{M}$ says that $W(y|\phi(y)) = \max_{x} W(y|x)$, and the condition $PV = Q^{\star}$ with $P$ uniform says that 
\begin{equation*}
  \sum_{y : \phi(y) = x} Q^{\star}(y) = \frac{1}{|\mathcal{X}|}, \ \text{for every } x,
\end{equation*}
which is \eqref{eq:equal_mass}. 
Conversely, given such a map $\phi$, the pair $(P,V)$ with $P(x)V(y|x) = Q^{\star}(y) \, \mathbbm{1}\{\phi(y) = x\}$ lies in $\mathcal{M}$, has uniform $P$, and satisfies $I(P,V) = \log|\mathcal{X}|$.
\end{proof}

In words, there is no gap at any rate if and only if some maximum-likelihood map partitions the outputs into $|\mathcal{X}|$ classes with equal weights $\sum_{y : \phi(y) = x} \max_{x'} W(y|x')$. This condition fails for most channels, in which case the two exponents differ on $(R^{\star}, \log|\mathcal{X}|]$; see Section~\ref{sec:examples}.
\begin{remark}
\label{rem:gap_sufficient}
By Lemma~\ref{lem:interval}, a gap is present whenever $H(Q^{\star}) < \log|\mathcal{X}|$, since then we have $R^{\star} < \log|\mathcal{X}|$. This is the case whenever $|\mathcal{Y}| < |\mathcal{X}|$, and for every binary channel with $\max_{x} W(0|x) \neq \max_{x} W(1|x)$.
\end{remark}

\section{Examples}  
\label{sec:examples}
\subsection{Modulo-additive channels}
\label{sec:modulo-additive}
Let $\mathcal{X}=\mathcal{Y}=G$ be a finite abelian group of order $q$, and let $W(y|x)=N(y-x)$ for a noise distribution $N$ on $G$. Define $N_{\max}:=\max_{a\in G}N(a)$ and $A:=\{a\in G:N(a)=N_{\max}\}$, and write $m:=|A|$. Since $\max_x W(y|x)=N_{\max}$ for every $y$, we have $C_\infty(W)=\log(qN_{\max})$, $Q^\star=U_G$, and $\mathcal{Y}_x=x+A$, where $U_G$ denotes the uniform distribution on $G$. Thus $\mathcal{M}$ consists of the pairs $(P,V)$ satisfying $PV=U_G$ and $P(x)V(y|x)>0\Rightarrow y-x\in A$.

Fix any mode $a_0\in A$. The map $\phi(y)=y-a_0$ is an ML map with conditional probability of correct decoding $W(x+a_0|x)=N_{\max}=\mathrm{e}^{C_\infty(W)}/q$ for every  $x$. Thus \eqref{eq:equal_mass} holds, and Corollary~\ref{cor:no_gap} gives
$K_{\mathrm O}(R)=K_{\mathrm{DK}}(R)$ for every $R\in[0,\log q]$.

To determine the lower threshold, note that every $(P,V)\in\mathcal{M}$ satisfies
$I(P,V)=\log q-H(V|P)\ge\log(q/m)$, since each conditional distribution with $P(x)>0$ is supported on the $m$-element set $x+A$. Equality is attained by taking $P=U_G$ and
$V(y|x)=\frac{1}{m}\mathbbm{1}\{y-x\in A\}$, which also gives $PV=U_G$. These observations give
\begin{equation*}
    R_\star=\log(q/m) \quad\mbox{and}\quad R^\star=\log q.
\end{equation*}
On $[R_\star,R^\star]$,
Theorem~\ref{thm:main}(a) and~(b) give
\begin{equation*}
    K_{\mathrm{DK}}(R)=K_{\mathrm O}(R)
        =R-\log(qN_{\max}).
\end{equation*}
Thus the common linear segment has length $\log m$ and is nondegenerate precisely when $N$ has more than one mode.

If the noise distribution has a unique mode, then $R_\star=R^\star=\log q$. Within $[0,\log q]$, the two exponents meet the endpoint line $K_\infty(R)$ only at $R=\log q$, where their common value is $\log(1/N_{\max})$. This includes the binary symmetric channel with crossover
probability different from $1/2$; see Fig.~\ref{fig:bsc}.

\begin{figure}[!t]
    \centering
    \begingroup%
\definecolor{SCrateInk}{HTML}{192938}%
\definecolor{SCrateMuted}{HTML}{65717D}%
\definecolor{SCrateBlue}{HTML}{205E95}%
\definecolor{SCrateOrange}{HTML}{C26020}%
\definecolor{SCrateGreen}{HTML}{39805A}%
\begin{tikzpicture}[
    x=9.6cm,y=12cm,
    font=\footnotesize,
    text=SCrateInk,
    >={Latex[length=1.6mm]},
    every node/.style={inner sep=1.5pt},
    dk/.style={SCrateBlue,line width=1.3pt},
    omura/.style={SCrateOrange,line width=0.9pt,
                  dash pattern=on 3.5pt off 2.5pt},
    endpoint/.style={SCrateGreen,line width=0.9pt,
                  dash pattern=on 4pt off 1.6pt on 0.7pt off 1.6pt},
    guide/.style={SCrateMuted!55,line width=0.4pt,densely dashed},
    callout/.style={SCrateMuted,line width=0.5pt,->}
]
    \def\BSCp{0.25}
    \edef\BSCcapacity{\fpeval{ln(2)+\BSCp*ln(\BSCp)
        +(1-\BSCp)*ln(1-\BSCp)}}
    \edef\BSCcinfty{\fpeval{ln(2*(1-\BSCp))}}
    \edef\BSClimit{\fpeval{ln(2)}}
    \edef\BSCend{\fpeval{-ln(1-\BSCp)}}
    \path[use as bounding box] (-0.135,-0.073) rectangle (0.765,0.357);

    \draw[->,SCrateInk,line width=0.55pt]
        (0,0) -- (0.747,0) node[anchor=south east,yshift=2pt] {$R$};
    \draw[->,SCrateInk,line width=0.55pt]
        (0,0) -- (0,0.335);
    \node[rotate=90] at (-0.10,0.1675) {Exponent};
    \node[anchor=north east] at (-0.004,-0.002) {$0$};

    \draw[guide] (0,\BSCend) -- (\BSClimit,\BSCend);
    \draw[guide] (\BSClimit,0) -- (\BSClimit,\BSCend);
    \draw[SCrateInk,line width=0.45pt]
        (-0.006,\BSCend) -- (0.006,\BSCend);
    \node[anchor=east] at (-0.012,\BSCend) {$\log(4/3)$};
    \foreach \v in {0.1,0.2} {
        \draw[SCrateInk,line width=0.4pt] (-0.005,\v) -- (0.005,\v);
        \node[anchor=east,text=SCrateMuted,font=\scriptsize]
            at (-0.010,\v) {$\v$};
    }

    \draw[endpoint] (\BSCcinfty,0) -- (\BSClimit,\BSCend);

    \draw[dk] (0,0) -- (\BSCcapacity,0);
    \draw[dk,line join=round]
        plot[domain=0.25:0.00001,samples=220,variable=\t]
        ({\fpeval{ln(2)+\t*ln(\t)+(1-\t)*ln(1-\t)}},
         {\fpeval{\t*ln(\t/\BSCp)
             +(1-\t)*ln((1-\t)/(1-\BSCp))}})
        -- (\BSClimit,\BSCend);

    \draw[omura] (0,0) -- (\BSCcapacity,0);
    \draw[omura,line join=round]
        plot[domain=0.25:0.00001,samples=220,variable=\t]
        ({\fpeval{ln(2)+\t*ln(\t)+(1-\t)*ln(1-\t)}},
         {\fpeval{\t*ln(\t/\BSCp)
             +(1-\t)*ln((1-\t)/(1-\BSCp))}})
        -- (\BSClimit,\BSCend);

    \foreach \rate in {\BSCcapacity,\BSCcinfty,\BSClimit} {
        \draw[SCrateInk,line width=0.45pt]
            (\rate,-0.006) -- (\rate,0.006);
    }
    \node[anchor=north] at (\BSCcapacity,-0.012) {$C$};
    \node[anchor=north] at (\BSCcinfty,-0.012) {$C_\infty(W)$};
    \node[anchor=north east,align=center] at (0.738,-0.012)
        {$R_\star=R^\star$\\[-1pt]$=\log|\mathcal X|=\log 2$};

    \fill[SCrateBlue] (\BSCcapacity,0) circle[radius=1.4pt];
    \draw[SCrateGreen,fill=white,line width=0.7pt]
        (\BSCcinfty,0) circle[radius=1.4pt];
    \draw[SCrateBlue,fill=white,line width=0.9pt]
        (\BSClimit,\BSCend) circle[radius=1.9pt];

    \draw[dk] (0.047,0.311) -- (0.095,0.311);
    \node[anchor=west,text=SCrateBlue] at (0.105,0.311)
        {$K_{\mathrm{DK}}(R)$};
    \draw[omura] (0.047,0.267) -- (0.095,0.267);
    \node[anchor=west,text=SCrateOrange] at (0.105,0.267)
        {$K_{\mathrm O}(R)$};
    \draw[endpoint] (0.047,0.230) -- (0.095,0.230);
    \node[anchor=west,text=SCrateGreen] at (0.105,0.230)
        {$K_\infty(R)=R-C_\infty(W)$};

    \node[align=center] (common) at (0.285,0.141)
        {$K_{\mathrm{DK}}=K_{\mathrm O}$\\[1pt]on $[0,\log 2]$};
    \draw[callout,shorten >=2pt] (common.east) -- (0.4662,0.1252);
    \node[align=center,font=\scriptsize] (contact) at (0.578,0.333)
        {Single contact\\with slope $1$};
    \draw[callout] (contact.south east) -- (0.687,0.293);
\end{tikzpicture}%
\endgroup
    \caption{Strong converse exponents of the BSC
    with crossover probability $1/4$. }
    \label{fig:bsc}
\end{figure}

\subsection{The binary erasure channel (BEC)}
\label{sec:bec}
Let $\mathcal{X}=\{0,1\}$ and $\mathcal{Y}=\{0,1,e\}$, with $W(x|x)=1-\epsilon$ and $W(e|x)=\epsilon$ for $\epsilon\in(0,1)$. For $y\in\{0,1\}$, the column maximum is $1-\epsilon$, attained uniquely at $x=y$. For the erasure output, the maximum is $\epsilon$, attained at both inputs. Hence, $C_\infty(W)=\log(2-\epsilon)$ and $Q^\star=(1-\epsilon,1-\epsilon,\epsilon)/(2-\epsilon)$, with $\mathcal{Y}_0=\{0,e\}$ and $\mathcal{Y}_1=\{1,e\}$.

Each pair $(P,V)\in\mathcal{M}$ is determined by the fraction $\beta\in[0,1]$ of the erasure mass $Q^\star(e)$ assigned to input $0$. Its joint distribution is
\begin{equation*}
    P\times V=\frac{1}{2-\epsilon}
    \begin{bmatrix}
        1-\epsilon & 0 & \beta\epsilon\\
        0 & 1-\epsilon & (1-\beta)\epsilon
    \end{bmatrix}.
\end{equation*}
Since $PV=Q^\star$ for every $\beta$, $I(P,V)=H(Q^\star)-H(V|P)$. The conditional entropy is strictly concave and symmetric in $\beta$, so it is maximized at $\beta=1/2$ and minimized at $\beta\in\{0,1\}$.

At $\beta=1/2$, $P$ is uniform and $V$ is a BEC with erasure probability $\frac{\epsilon}{2-\epsilon}$. At $\beta=1$, the conditional supports of $V$ are disjoint and $P(0)=\frac{1}{ 2-\epsilon}$, so $I(P,V)=H(P)$. Consequently,
\begin{equation*}
    R_\star=\frac{2(1-\epsilon)}{2-\epsilon}\log 2,
    \qquad
    R^\star=h\!\left(\frac{1}{2-\epsilon}\right),
\end{equation*}
where $h$ denotes the binary entropy function. For every $\epsilon\in(0,1)$, $R_\star<R^\star<\log 2$.

The equal-mass condition in Corollary~\ref{cor:no_gap} also fails directly: every ML map assigns outputs $0$ and $1$ to their corresponding inputs and the erasure symbol to one of them. The resulting decoding classes have weights $1$ and $1-\epsilon$, which are unequal.

By Theorem~\ref{thm:main}(c), the two exponents coincide for every $R\in[0,R^\star]$. On $[R_\star,R^\star]$, parts~(a) and~(b) give
\begin{equation*}
    K_{\mathrm{DK}}(R)=K_{\mathrm O}(R)
        =R-\log(2-\epsilon).
\end{equation*}
For $R\in(R^\star,\log 2]$, $K_{\mathrm{DK}}(R)$ continues along this line, while $K_{\mathrm O}(R)>K_{\mathrm{DK}}(R)$.

At $R=\log 2$, the constraint $I(P,V)\ge\log 2$ forces $P$ to be uniform and the conditional supports of $V$ to be disjoint. Finite divergence restricts these supports to those of $W$, so at least one row must put all its mass on its non-erasure output. Minimizing the divergence leaves the other row equal to the corresponding row of $W$. Thus,
$K_{\mathrm O}(\log 2)=\frac12\log\frac{1}{1-\epsilon}$
and
$K_{\mathrm{DK}}(\log 2)=\log\frac{2}{2-\epsilon}$.
The gap is
$\frac12\log\bigl(1+\frac{\epsilon^2}{4(1-\epsilon)}\bigr)>0$.
\section{Concluding Remarks}

Our characterization identifies two distinct transition rates: $R_\star$, at which the Dueck--K\"orner exponent $K_{\mathrm{DK}}$ reaches the
endpoint line $K_{\infty}$, and $R^\star$, beyond which Omura's exponent $K_{\mathrm{O}}$  is strictly larger. As shown for the BEC, these rates can delimit a nondegenerate interval on which both exponents coincide with the endpoint line. This contrasts with the random coding and sphere packing exponents below capacity, which cannot share a nondegenerate linear segment of slope $-1$.

Moreover, for DMCs satisfying the equal-mass condition in Corollary~\ref{cor:no_gap}, the two strong converse exponents coincide throughout $[0,\log|\mathcal X|]$. Such equality throughout the below-capacity rate range is impossible for the random coding and sphere packing exponents of a positive-capacity DMC, since they differ at sufficiently small positive rates.

\section*{Acknowledgments}
The authors thank Prof.\ Bar{\i}{\c{s}} Nakibo{\u{g}}lu for identifying an error in a preprint of their earlier paper~\cite{Tan2025}, which motivated this work. They also thank Prof.\ Yasutada Oohama for clarifications concerning his paper~\cite{oohama2012}.

The authors used OpenAI Codex (GPT-6 Astra, Ultra mode) and Anthropic's Claude (Opus 5 and Fable 5.1) to explore and stress-test proof strategies, identify counterexamples, formulate and refine intermediate lemmas, check algebraic and probabilistic calculations, and improve the exposition. The authors independently verified all mathematical statements, proofs, and references in the final manuscript and take full responsibility for its contents.
\bibliographystyle{IEEEtran}
\bibliography{References}

@article{Wolfowitz1957,
  title={The coding of messages subject to chance errors},
  author={Wolfowitz, Jacob},
  journal={Illinois J. Math.},
  volume={1},
  number={4},
  pages={591--606},
  year={1957}
}

@article{IssaWagnerKamath2020,
  author  = {Ibrahim Issa and Aaron B. Wagner and Sudeep Kamath},
  title   = {An Operational Approach to Information Leakage},
  journal={IEEE Trans. Inf. Theory}, 
  volume  = {66},
  number  = {3},
  pages   = {1625--1657},
  month   = mar,
  year    = {2020},
  doi     = {10.1109/TIT.2019.2962804}
}

@inproceedings{ChengNakiboglu2020,
  author    = {Hao-Chung Cheng and Bar{\i}{\c{s}} Nakibo{\u{g}}lu},
  title     = {Refined Strong Converse for the Constant Composition Codes},
  booktitle = {Proc. IEEE Int. Symp. Inf. Theory (ISIT)},
  pages     = {2149--2154},
  year      = {2020},
  doi       = {10.1109/ISIT44484.2020.9174315}
}

@ARTICLE{Arimoto1973,
  author={Arimoto, S.},
  journal={IEEE Trans. Inf. Theory}, 
  title={On the converse to the coding theorem for discrete memoryless channels (Corresp.)}, 
  year={1973},
  volume={19},
  number={3},
  pages={357-359},
  doi={10.1109/TIT.1973.1055007}}

@article{Dueck1979,
  title={Reliability function of a discrete memoryless channel at rates above capacity (Corresp.)},
  author={Dueck, G. and K\"orner, J.},
  journal={IEEE Trans. Inf. Theory},
  volume={25},
  number={1},
  pages={82--85},
  year={1979},
  publisher={IEEE}
}

@article{Omura1975,
  title={A lower bounding method for channel and source coding probabilities},
  author={Omura, Jim K},
  journal={Inf. Control},
  volume={27},
  number={2},
  pages={148--177},
  year={1975},
  publisher={Elsevier}
}

@book{Gallager1968,
  title={Information theory and reliable communication},
  author={Gallager, Robert G},
  year={1968},
  publisher = {John Wiley \& Sons, Inc.},
  address   = {New York}
}

@article{Haroutunian1968,
  author  = {Haroutunian, E. A.},
  title   = {Estimates of the error exponent for the semi-continuous memoryless channel},
  journal = {Probl. Peredachi Inf.},
  volume  = {4},
  number  = {4},
  pages   = {37--48},
  year    = {1968},
  note    = {In Russian}
}

@book{Csiszar2011,
  title={Information theory: {Coding} theorems for discrete memoryless systems},
  author={Csisz{\'a}r, Imre and K{\"o}rner, J{\'a}nos},
  year={2011},
  publisher={Cambridge University Press}
}

@INPROCEEDINGS{oohama2012,
  author={Oohama, Y.},
  booktitle={Proc. IEEE Int. Symp. Inf. Theory and Appl. (ISITA)}, 
  title={On two strong converse theorems for discrete memoryless channels}, 
  year={2012},
  volume={},
  number={},
  pages={46--50}}

@ARTICLE{Tan2025,
  author={Tan, Vincent Y. F. and Joudeh, Hamdi},
  journal={IEEE Trans. Inf. Theory}, 
  title={Ensemble-Tight Second-Order Asymptotics and Exponents for Guessing-Based Decoding With Abandonment}, 
  year={2025},
  volume={71},
  number={10},
  pages={7555-7567},
  doi={10.1109/TIT.2025.3597405}}

@article{Csiszar1995,
  author  = {Csisz{\'a}r, Imre},
  title   = {Generalized cutoff rates and {R}{\'e}nyi's information measures},
  journal = {IEEE Trans. Inf. Theory},
  volume  = {41},
  number  = {1},
  pages   = {26--34},
  year    = {1995}
}

@article{Nakiboglu2019,
  author  = {Nakibo{\u{g}}lu, Bar{\i}{\c{s}}},
  title   = {The {A}ugustin capacity and center},
  journal = {Probl. Inf. Transm.},
  volume  = {55},
  number  = {4},
  pages   = {299--342},
  year    = {2019}
}

@article{MosonyiOgawa2017,
  author  = {Mosonyi, Mil{\'a}n and Ogawa, Tomohiro},
  title   = {Strong converse exponent for classical-quantum channel coding},
  journal = {Commun. Math. Phys.},
  volume  = {355},
  number  = {1},
  pages   = {373--426},
  year    = {2017}
}

@INPROCEEDINGS{Polyanskiy2010,
  author={Polyanskiy, Yury and Verdú, Sergio},
  booktitle={Proc. Allerton Conf. Commun., Control, Comput.}, 
  title={Arimoto channel coding converse and Rényi divergence}, 
  year={2010},
  volume={},
  number={},
  pages={1327-1333},
  doi={10.1109/ALLERTON.2010.5707067}}

\end{document}